\documentclass[letterpaper, 10 pt, journal, twoside]{IEEEtran}
\IEEEoverridecommandlockouts                            


\usepackage{caption}
\usepackage{blindtext, graphicx}

\usepackage{epsfig} 
\usepackage{times} 
\usepackage{graphics} 
\usepackage{epsfig} 
\usepackage{times}
\usepackage{amsmath} 
\usepackage{color}
\usepackage{fix-cm}
\usepackage{mathtools}
\usepackage{fix-cm}
\usepackage{soul}
\usepackage{subcaption}

\usepackage{caption}
\usepackage{comment}
\usepackage{xcolor}
\usepackage{cite}
\usepackage[colorlinks=true, linkcolor=blue, citecolor=green, urlcolor=blue]{hyperref} 
\usepackage[linesnumbered,ruled,vlined]{algorithm2e}

\usepackage{amssymb}

\usepackage{amsthm}
\usepackage{float}
\theoremstyle{remark}

\newtheorem{remark}{Remark}
\newtheorem{theorem}{Theorem}
\newtheorem{lemma}{Lemma}
\newtheorem{definition}{Definition}

\usepackage{algorithmic}
\usepackage{xcolor}
\usepackage{comment}

\usepackage{amsmath} 

\usepackage{hyperref}
\usepackage{xcolor}

\usepackage[linesnumbered,ruled,vlined]{algorithm2e}

\usepackage{algorithmic}
\newif\ifLongVersion 
\LongVersiontrue
\newcommand{\COLOR}{black} 

\captionsetup[figure]{labelfont={bf},labelformat={default},labelsep=period,name={Fig.}}

\allowdisplaybreaks

\title{\LARGE \bf
\textcolor{\COLOR}{Stability and Convergence Analysis of Multi-Agent Consensus with Communication Delays: A Lambert W Function Approach}}

\author{Layan Badran, Kiarash Aryankia, and Rastko R. Selmic, \IEEEmembership{Senior Member,~IEEE}
\thanks{L. Badran, K. Aryankia, and R. R. Selmic are with the Department of Electrical and Computer Engineering, Concordia University, Montreal, QC, Canada.
       {\tt\small layan.badran@mail.concordia.ca},
       {\tt\small kiarash.aryankia@concordia.ca},
       {\tt\small rastko.selmic@concordia.ca}}
}

\allowdisplaybreaks

\begin{document}

\maketitle
\thispagestyle{empty}
\pagestyle{empty}

\begin{abstract}
This paper investigates the effect of constant time delay in weakly connected multi-agent systems modeled by double integrator dynamics. A novel analytical approach is proposed to establish an upper bound on the permissible time delay that \textcolor{\COLOR}{ensures stability and consensus convergence}. The analysis employs the Lambert W function method in higher-dimensional systems to derive explicit \textcolor{\COLOR}{conditions under which consensus is achieved. The theoretical results are rigorously proven and provide insight into the allowable delay margins. The analysis applies to general leaderless undirected network topologies. The framework also accounts for complex and realistic delays, including non-commensurate communication delays. Numerical examples are provided to demonstrate the effectiveness of the proposed method.}
\end{abstract}

\section{Introduction}\label{sec1}

This paper examines communication delays in a connected leaderless network. When modeling real-world applications, delays are introduced to account for uncertainties arising from environmental disturbances affecting sensor inputs. \textcolor{\COLOR}{Particularly, communication delays of non-commensurate type are considered: each agent is assumed to have access to its own state information instantaneously through embedded sensors, but receives delayed information from its neighbors.}  While non-commensurate communication delays are mathematically complex, they more accurately capture real-world scenarios. Therefore, this analysis focuses exclusively on non-commensurate delays. The proposed methodology is also applicable to commensurate delays and, in fact, significantly simplifies their analysis.

The critical time delay of multi-agent systems is modeled by delay differential equations and was extensively studied by \cite{16}. The authors investigated the correlation between the delay value and system stability, providing a sufficient condition for stability based on the critical delay value. \textcolor{\COLOR}{In \cite{Sumpter}, a delay bound is derived for the second order case}. However, their analysis is limited to commensurate delays. \textcolor{\COLOR}{The authors in \cite{review} and \cite{recent} provided a detailed review of the major advances in the field, explaining how delays impact different systems' stability.} In many cases, while delay affects convergence time, stability itself remains independent of delay. Complete stability is harder to analyze since equations of this type have infinite solutions.

An infrequently explored method for addressing the issue of infinite solutions involves the utilization of the Lambert W function. \textcolor{\COLOR}{The Lambert W function is defined as any function satisfying $x=W(x)e^{W(x)}$ and} has traditionally been used for scalar systems and, to a lesser extent, higher-dimensional systems in specific canonical forms (\hspace{1sp}\cite{Moore,sunyi,HWANG20051979}). Its application to multi-agent systems remains limited, largely due to the mathematical complexity of defining matrix functions. However, symmetry, particularly in Laplacians derived from undirected graphs, simplifies such analysis. The Lambert W function is thus especially well-suited for multi-agent systems with symmetric Laplacians, making the results broadly applicable.

\textcolor{\COLOR}{In the field of multi-agent systems, \cite{Kia} uses the Lambert W function to obtain an estimate of the worst case rate of convergence to consensus. In \cite{Sumpter}, the effect of delay on multi-agent consensus is experimentally analyzed with the Lambert W function using simulations. Both authors in \cite{Kia} and \cite{Sumpter} conclude that a purely analytical approach is highly desirable and the need of generalization to higher order systems. In addition, a self-triggered consensus approach using Lambert W-based sampling for single-integrator multi-agent systems is proposed in
\cite{wakaiki2024self}.}

This paper analyzes the consensus of the multi-agent system using a matrix-based Lambert W function approach. Moreover, the critical delay margin is identified analytically, theoretically proven, and verified with a numerical example.

This paper presents two key contributions:
\begin{enumerate}  
   \item Extending the application of the Lambert W function to multi-agent systems and establishing a relationship between the delay parameter and the system's stability. In this regard, the results are novel.  
   \item Explicitly deriving the system eigenvalues as functions of the delay parameter, along with a necessary and sufficient condition for stability.
\end{enumerate}  
\ifLongVersion{} \else \textcolor{\COLOR}{Due to space constraints, some details were omitted in this manuscript. A comprehensive version is available at \cite{Arxiv}.} \fi

The paper is organized as follows: Section \ref{sec2} covers the necessary background in graph theory and the Lambert W function. Section \ref{sec3} presents the problem formulation. Section \ref{sec4} contains an extensive mathematical proof of the time delay bound. Section \ref{sec5} presents numerical results. Finally, Section \ref{sec6} is a conclusion and discussion of future work.

\section{Background}\label{sec2}

\subsection{Notation}Let ${\mathbb{R}^n}$ denote an $n$-dimensional vector space, and ${\mathbb{R}^{n \times m}}$ denote the collection of $n \times m$ real matrices. Let $\mathbb{C}$ represent the set of complex numbers, and $\mathbb{C}^{n \times n}$ denote the collection of $n \times n$ complex matrices. The $n \times n$ identity matrix is denoted by $\mathbf{I}_n$, and an $n$-vector of all ones is denoted by $\mathbf{1}_n = [1, \dots, 1]^T$. A diagonal matrix constructed from a vector $x$ using the $\text{diag}$ operator is denoted by $X = \text{diag}(x)$. Bold notation is reserved for matrices. For a matrix $\mathbf{A}$, $\lambda_i(\mathbf{A})$ denotes the eigenvalue of $\mathbf{A}$, and $\text{Re}[\lambda]$ denotes the real part of the eigenvalue $\lambda$.
For a matrix $\mathbf{A}$, its nullity space is denoted by $\mathrm{Nullity}(\mathbf{A})$. The symbol $\cup$ denotes the union of two sets. The determinant of a square matrix $\mathbf{A} \in \mathbb{R}^{n \times n}$ is denoted by $\det(\mathbf{A})$.

\subsection{Graph Theory}
Consider a multi-agent system modeled by an undirected graph $G = (V, E)$ with $n$ agents. Assume that the graph is connected: the agent set is $V = \{1, 2, \dots, n\}$, and the edge set $E \subseteq V \times V$ represents the communication links between the agents. The neighboring set of an agent $i$ is defined as $\mathcal{N}(i) = \{ j \mid (i,j) \in E \}$.

\textcolor{\COLOR}{The adjacency matrix is $\mathbf{A}=[a_{ij}] \in\mathbb{R}^{n\times n}$ with $a_{ij}=1$ for $(i,j) \in E$, otherwise $a_{ij} = 0$. We consider a graph topology that ensures the invertibility of its adjacency matrix. The degree matrix $\mathbf{D}\in\mathbb{R}^{n\times n}$ is given by \textcolor{\COLOR}{$\mathbf{D} = \text{diag}\{d_i\}$, where $d_i$ is the degree of the $i$-th agent.}} 

\subsection{The Lambert W Function}\label{sec3-C}
An effective method widely used for stability analysis is the Lyapunov method, also known as the \textit{Lyapunov-Krasovskii}/\textit{Razumikhin} method for delayed systems. The main advantage of this method is that it applies to most systems, but the stability criteria it provides are conservative. On the other hand, the Lambert W function provides explicit expressions of the characteristic roots in terms of the system parameters but has a limited applicability. The first extensive research of the Lambert W function in the context of stability was \cite{Moore,sunyi}, where the authors reached results related to stability for some specific matrix cases and explored a general theory about the matrix version of the function. \textcolor{\COLOR}{A detailed discussion can be found in  \cite{moradian2022study}}.

\begin{definition} [\hspace{-.1mm}\cite{shinozaki}] 
 \label{Lambert_definition}
    Let $z \in \mathbb{C}$, then the Lambert W function $W_k(z)$ is any function satisfying the equality
    \begin{equation}
    W_k(z)e^{W_k(z)} = z
    \end{equation}
where $k = 0, \pm 1, \pm 2, \hdots, \infty $ signifies the specific branch. 
\end{definition}

The function $W_k(z)$ has an infinite number of branches, but only two of them have real values, $W_0(z)$ and $W_{-1}(z)$. The \textit{principal} branch, $W_0(z)$, is defined on $[-1, +\infty)$ for all $z \in [-1/e, + \infty)$, and $W_{-1}(z)$ is defined on $[-1, -\infty)$ for all $z \in [-1/e, 0)$.

\begin{definition} [\hspace{-.1mm}\cite{Higham}] \label{LambertMatrix_definition}
    \textcolor{\COLOR}{Suppose $\mathbf{H} \in \mathbb{C}^{n \times n}$ is an arbitrary matrix with $\lambda_i(\mathbf{H})=h_i$.} Its Jordan canonical form is $\mathbf{H}=\mathbf{Z} \mathbf{J} \mathbf{Z}^{-1}$. Then the Lambert W matrix function is defined as 
\begin{equation}\label{eq_25}
\mathbf{W}_k(\mathbf{H}) = \mathbf{Z} \mathbf{W}_k(\mathbf{J})\mathbf{Z}^{-1} = \mathbf{Z} \: \textcolor{\COLOR}{\text{diag}} (\mathbf{W}_k(\mathbf{J_i})) \: \mathbf{Z}^{-1},
\end{equation}
\noindent where
{\small{\begin{equation}\label{definition_jordanform}
    \mathbf{W}(\mathbf{J}_i) = 
    \begin{bmatrix}
        W(h_i) & W'(h_i) & \hdots & \frac{W^{(p_i -1)}(h_i)}{(p_i -1)!}\\
             0   & W(h_i) & \ddots &  \vdots\\
       \vdots & & &  W'(h_i) \\
       0 & \hdots& 0 & W (h_i),\\
    \end{bmatrix},
\end{equation}}}
with $p_1 + p_2+ \cdots +p_i = n$, where $p_i$ indicates the order of differentiation. \textcolor{\COLOR}{It is deduced that $\lambda_i(\mathbf{W}_k(\mathbf{H}))=\lambda_i(\mathbf{H})=h_i$.}
\end{definition}

\textcolor{\COLOR}{Definition \ref{Lambert_definition} is equivalently applicable to matrices.} Definition \ref{LambertMatrix_definition} establishes an equality between the eigenvalues of $\mathbf{W}(\mathbf{H})$ and $\mathbf{H}$, but not between $\mathbf{W}(\mathbf{H})$ and the matrix $\mathbf{H}$ itself. The \textcolor{\COLOR}{explicit} correlation \textcolor{\COLOR}{between $\mathbf{W}(\mathbf{H})$ and $\mathbf{H}$} has been researched extensively in specific cases, \textcolor{\COLOR}{none of which include}  multi-agent systems. The authors in  \cite{counter-arg2} studied the function in the scalar case, i.e., as if the system had only one agent. \textcolor{\COLOR}{The authors in \cite{Choudhary_2017} extended the scalar case results to an $n$-th order system; however, their method is limited to systems that can be transformed into companion canonical form.}

\section{Problem Formulation}\label{sec3}

\subsection{System dynamics}
Consider the double integrator multi-agent system, which is given by
\begin{equation}\label{eq_5}
    \begin{cases}
    \Dot{x}_i = v_i,\\
    \Dot{v}_i = u_i,
    \end{cases}
\end{equation}
where $i = 1,2,\hdots,n$ is the agent index. The variables $x_i \in \mathbb{R}$ and $v_i \in \mathbb{R}$ are the position state and the velocity state, respectively. The control input of the $i$-th agent is $u_i$. The consensus protocol is given by
\begin{equation}\label{eq_6}
    {u}_i= - \sum_{j \in \mathcal{N}(i)} a_{ij}(x_i(t) - x_j(t)) + \gamma(v_i(t)-v_j(t))) ,
\end{equation}
where $\gamma \neq 0$ is a parameter representing the weight given to the relative velocity by the control input.

\subsection{Time-Delayed Control Input in Multi-Agent Systems}
Introducing a \textcolor{\COLOR}{constant communication} delay \textcolor{\COLOR}{$\tau$ from agent $i$ to agent $j$}, the consensus protocol is given by
{\small{\begin{equation}\label{eq_6-1}  
    {u}_i= - \sum_{j \in \mathcal{N}(i)} a_{ij}((x_i(t) - x_j(t-\tau)) + \gamma (v_i(t)- v_j(t-\tau))).
\end{equation}}}Let $\mathbf{\Gamma}(t) = [\mathbf{x}(t)^T,\mathbf{v}(t)^T]^T$ where $\mathbf{x}(t) = [x_1, x_2,\hdots, x_n]$ and $\mathbf{v}(t) = [v_1, v_2,\hdots, v_n]$. Then the state-space matrix form of (\ref{eq_5}) can be rewritten using (\ref{eq_6}) as 
\begin{equation}\label{eq_9_10}
\begin{split}
    \mathbf{\Dot{\Gamma}}(t) 
    &= \mathbf{T}  \mathbf{\Gamma}(t) + \mathbf{T}_d \mathbf{\Gamma}(t-\tau) ,
\end{split}
\end{equation}
where $\mathbf{T} = \begin{bmatrix}
        0&\mathbf{I}_n\\
        -\mathbf{D}&-\gamma \mathbf{D}\\
    \end{bmatrix}$, and $\mathbf{T}_d =
     \begin{bmatrix}
        0&0\\
        \mathbf{A}& \gamma \mathbf{A}\\
    \end{bmatrix}  \in \mathbb{R}^{2n \times 2n}.$

\textcolor{\COLOR}{Protocol (\ref{eq_6-1}) is a widely used consensus protocol that has been shown to guarantee consensus convergence, even in the presence of various types of delay, provided that the delay ensures stability,\cite{review,ita}. In other words, the delay bound on $\tau$ that ensures asymptotic stability also ensures convergence to consensus,\cite{Sumpter}.} Before determining the critical delay, it is important to first verify whether the system's stability is dependent on delay from the first place. In \cite{recent}, the authors proved that if $\mathbf{T}$ is stable and the system's characteristic equation satisfies specific conditions, then the system's stability is unaffected by delay. However, whether the stability generally depends on the delay or not remains an NP-hard problem \cite{NP}.

The critical delay of a system is a well-researched problem that has already been solved via multiple methods, but most studies consider only the scalar case (i.e., when there is only one agent, rendering $\mathbf{A}$ and $\mathbf{D}$ scalars), and in the first-order case \cite{kim}, and commensurate delay case \cite{Sumpter}. Applying those methodologies to higher-order systems, such as multi-agents with non-commensurate delay, is a much less studied area of research \cite{recent}.

\section{Stability Analysis}\label{sec4}

\subsection{Lambert W Function Approach}\label{sec4_A}
In \cite{sunyi}, the authors propose a conjecture that when the matrix $\mathbf{T}_d$ does not have repeated zero eigenvalues (has rank $2n-1$), the dominant eigenvalues of the system can be found using the principal branch of $W_k$ at $k=0$. The authors in \cite{counter-arg2} disproved this conjecture with counterexamples. Moreover, they showed that in some cases, it is possible to find \textit{all} characteristic roots of the system using only two branches, $k=0$ and $k=-1$. In either case, not all eigenvalues are needed to analyze stability; it is enough to consider the largest eigenvalue since it acts as an upper bound for the remaining eigenvalues. 

In the system presented in this paper, $\mathbf{T}_d$ has $n$ repeated zero eigenvalues, so the conjecture and its counterargument do not apply, which means a new methodology is needed to apply the Lambert W matrix function approach to this network topology. The stability analysis requires the following lemma:

\begin{lemma}[\hspace{-.1mm}\cite{shinozaki}] \label{lemma_1}
If $z > -1/e$, then
\begin{equation}\label{eq_11}
   \max_{k = 0, \pm 1, \pm 2, \hdots, \pm \infty} \mathrm{R}e[W_k(z)] = \mathrm{R}e[W_0(z)],
\end{equation}
and if $z \leq -1/e$ 
 \begin{equation}\label{eq_12}
    \max_{k = 0, \pm 1, \pm 2, \hdots, \pm \infty} \mathrm{R}e[W_k(z)] = \mathrm{R}e[W_0(z)] =  \mathrm{R}e[W_{-1}(z)].
\end{equation}
\end{lemma}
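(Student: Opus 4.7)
The plan is to use the defining relation $W_k(z) e^{W_k(z)} = z$ directly in rectangular coordinates. Writing $W_k(z) = u_k + i v_k$ and adopting the standard branch convention that places $v_k$ roughly in the strip $((2k-1)\pi, (2k+1)\pi]$, I would exploit the fact that branches of large $|k|$ have large $|v_k|$, forcing a small $u_k$ through the modulus of the defining identity.

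The first step is to take moduli on both sides of $W_k(z) e^{W_k(z)} = z$, yielding
\begin{equation}
e^{2 u_k}(u_k^2 + v_k^2) = |z|^2.
\end{equation}
Differentiating the left-hand side with respect to $u_k$ at fixed $v_k$ gives $2e^{2u_k}(u_k^2 + u_k + v_k^2)$, which is strictly positive whenever $v_k^2 > 1/4$. Hence for every branch with $|v_k| \geq \pi$, and in particular for every $k \notin \{0,-1\}$, the map $u_k \mapsto e^{2u_k}(u_k^2 + v_k^2)$ is strictly increasing, so $u_k$ is uniquely determined by $(|z|, |v_k|)$ and strictly decreasing in $|v_k|$. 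The branch convention then implies $u_k \leq u_0$ and $u_k \leq u_{-1}$ for all $k \notin \{0,-1\}$, reducing the infinite-branch maximization to comparing only $u_0$ and $u_{-1}$.

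The second step settles these two remaining quantities by cases on the real parameter $z$. For $z > -1/e$ I would split into $z \geq 0$, where $W_0(z)$ is real and nonnegative while $W_{-1}(z)$ is complex with $|v_{-1}| \geq \pi$ (so the same monotonicity forces $u_0 > u_{-1}$), and $-1/e < z < 0$, where $W_0(z) \in (-1,0)$ and $W_{-1}(z) \in (-\infty,-1)$ are both real, giving $u_0 > u_{-1}$ directly. This establishes (\ref{eq_11}). At $z = -1/e$ both branches evaluate to $-1$, and for $z < -1/e$ the standard reflection property $W_{-1}(z) = \overline{W_0(z)}$ yields equal real parts, establishing (\ref{eq_12}).

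The main obstacle is making the monotonicity step airtight: the quadratic factor $u_k^2 + u_k + v_k^2$ is not globally nonnegative, failing inside a disk of radius $1/2$ centered at $(-1/2,0)$ in the $(u,v)$-plane, so the implicit equation for $u_k$ can admit multiple solutions there. Care is therefore required to verify that the branch selected by the Lambert convention always lies in the strict-monotonicity region; this is immediate for $|v_k| \geq \pi$ but has to be handled separately in the boundary comparison between $W_0$ and $W_{-1}$ via the explicit case analysis above.
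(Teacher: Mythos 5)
The paper offers no proof of this lemma at all --- it is imported verbatim from \cite{shinozaki} as a known property of the Lambert W branches --- so there is no in-text argument to measure yours against. Judged on its own terms, your strategy is the right one and is close in spirit to the standard proof: the modulus identity $e^{2u_k}(u_k^2+v_k^2)=|z|^2$, combined with the branch-range facts $|v_k|\ge\pi$ for $k\notin\{0,-1\}$ and $|v_0|<\pi$, correctly reduces the infinite maximization to the two real branches, and your case analysis of $W_0$ versus $W_{-1}$ over $z\ge 0$, $-1/e<z<0$, and $z\le-1/e$ (with the reflection $W_{-1}(z)=\overline{W_0(z)}$ for $z<-1/e$ giving (\ref{eq_12})) is correct.

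The one step that must be tightened is the deduction ``$u_k$ is strictly decreasing in $|v_k|$, hence $u_k\le u_0$,'' which you yourself flag. As phrased it asks the level set $\{e^{2u}(u^2+v^2)=|z|^2\}$ to be a decreasing graph over $|v|$ all the way down to $|v_0|$, which may be $0$; there the $u$-derivative $2e^{2u}(u^2+u+v^2)$ changes sign and the level set need not be a graph, so the comparison cannot be run at the smaller imaginary part. The repair is to run it at the \emph{larger} one: from $|v_0|<|v_k|$ one gets $e^{2u_0}(u_0^2+v_k^2)>e^{2u_0}(u_0^2+v_0^2)=|z|^2=e^{2u_k}(u_k^2+v_k^2)$, and since $u\mapsto e^{2u}(u^2+v_k^2)$ is strictly increasing at the fixed value $|v_k|\ge\pi>1/2$, this forces $u_0>u_k$ without ever entering the non-monotone disk. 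The same one-line comparison disposes of $W_0$ (real, $v_0=0$) versus $W_{-1}$ when $z>0$, where $|v_{-1}|\in(\pi,2\pi)$. With that substitution your argument is complete; the only remaining cosmetic defect is that the stated convention $v_k\in((2k-1)\pi,(2k+1)\pi]$ is wrong for $k=-1$, whose range contains the real ray $(-\infty,-1]$, but you never actually rely on it for that branch.
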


Suppose that the system (\ref{eq_9_10}) was scalar, i.e., $\mathbf{T} = \alpha$ and $\mathbf{T}_d= \beta$, where $\alpha, \beta \in \mathbb{C}$. Then 
\begin{equation}
    s = \frac{1}{\tau} W(\tau \beta e^{-\tau \alpha}) + \alpha,
\end{equation}
is an explicit expression for the characteristic roots of system (\ref{eq_9_10}) if it was scalar  \cite{shinozaki}. A similar expression was derived for the matrix case, but only when $\mathbf{T}$ and $\mathbf{T}_d$ commute \cite{shinozaki}.

In the case for this paper, when $\mathbf{T}$ and $\mathbf{T}_d$ do not commute, the authors of \cite{sunyi} provided a valid algorithm that can be used. Let $\mathbf{S} =   \mathbf{T} + \mathbf{T}_d e^{-\mathbf{S}\tau}$, then the problem reduces to finding a solution to 
\begin{equation}\label{eq_14}
    \mathbf{S} - \mathbf{T} - \mathbf{T}_d e^{-\mathbf{S}\tau} = 0.
\end{equation}
Since $\mathbf{T}$ and $\mathbf{T}_d$ do not commute, an unknown matrix $\mathbf{Q}_k= 
    \begin{bmatrix}
        \mathbf{Q}_{1}&\mathbf{Q}_{2}\\
        \mathbf{Q}_{3}&\mathbf{Q}_{4}
    \end{bmatrix}$ is introduced that satisfies the following equation
\begin{equation}\label{S_T}
    \tau(\mathbf{S}_k-\mathbf{T})e^{(\mathbf{S}_k-\mathbf{T})\tau} = \tau \mathbf{T}_d \mathbf{Q}_k .
\end{equation}\textcolor{\COLOR}{A new matrix $\mathbf{M}_k$ is introduced such that} $\mathbf{M}_k = \tau \mathbf{T}_d \mathbf{Q}_k$ \textcolor{\COLOR}{(reference \cite{sunyi} explains why this matrix is needed)}. Note that $\mathbf{M}_k$  has the same dimension and form as $\mathbf{T}$ and can be expressed by 
$    \mathbf{M}_k = 
    \begin{bmatrix}
        \mathbf{0}_n&\mathbf{0}_n\\
        \mathbf{M}_{21}&\mathbf{M}_{22}\\
    \end{bmatrix}$.
Using \textcolor{\COLOR}{Definition \ref{Lambert_definition}}, one has
\begin{equation} \label{MK}
    \mathbf{W}(\mathbf{M}_k)e^{\mathbf{W}({\mathbf{M}}_k)} = \mathbf{M}_k,
\end{equation}
so that
\begin{equation}\label{eq_18}
    \mathbf{S_k} = \frac{1}{\tau}\mathbf{W}(\mathbf{M}_k)+\mathbf{T} .
\end{equation}

The eigenvalues of $\mathbf{S}_k$ determine the stability of the system, but $\mathbf{S}_k$ has infinitely many eigenvalues since $k = 0, \pm 1, \pm 2, \hdots, \pm \infty$. \textcolor{\COLOR}{At this stage, Lemma \ref{lemma_1} proves useful in} determining the stability of the system without computing infinite eigenvalues of $\mathbf{S}_k$ for infinite branches of $\mathbf{W}_k$.

\textcolor{\COLOR}{
In order to find  $\mathbf{Q}_k$, (\ref{MK}) is rewritten as
\begin{equation}\label{w_M}
    \mathbf{W}_k(\tau \mathbf{T}_d \mathbf{Q}_k) e^{\mathbf{W}_k(\tau \mathbf{T}_d \mathbf{Q}_k) + \tau \mathbf{T}} =\tau \mathbf{T}_d
\end{equation}
which can be solved using MATLAB. There are infinitely many solutions for $\mathbf{Q}$ satisfying the above equation, which is why an educated initial guess is needed. More is discussed in Section \ref{sec5}.}

\vspace{-3mm}
\subsection{Preliminary Analysis}
In order to find a critical bound on the delay $\tau$, a relationship between the delay and the eigenvalues of $\mathbf{S}_k$ is needed. The branch subscript notation $k$ is removed to make the analysis simpler to follow. The matrix $\mathbf{M}$ depends on $\tau$ by definition, so a relationship between $\mathbf{M}$ and its Lambert W function $\mathbf{W}(\mathbf{M})$ is needed.

\begin{lemma}\label{lemma_2}
The Lambert W function of the matrix $\mathbf{M}$ has the following form
\begin{equation}\label{symmetric_lemma}
     \mathbf{W}(\mathbf{M}) = 
    \begin{bmatrix}
        \mathbf{0}_n & \mathbf{0}_n\\
        \mathbf{W}_{21} &  \mathbf{W}_{22}
    \end{bmatrix},
\end{equation}
where $\mathbf{W}_{21}, \mathbf{W}_{22} \in \mathbb{R}^{n \times n}$ are symmetric and mutually commuting matrices as well as $\mathbf{W}_{21} - \tau \mathbf{D}$ and $\mathbf{W}_{22} - \tau \gamma \mathbf{D}$.

\end{lemma}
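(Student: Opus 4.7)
The plan is to decompose the claim into three assertions: (i) the block shape $\mathbf{W}(\mathbf{M}) = \begin{bmatrix}\mathbf{0}_n&\mathbf{0}_n\\ \mathbf{W}_{21}&\mathbf{W}_{22}\end{bmatrix}$ of \eqref{symmetric_lemma}; (ii) symmetry of $\mathbf{W}_{21}$ and $\mathbf{W}_{22}$; and (iii) the two mutual commutation relations. First I would exploit the fact that $\mathbf{T}_d$ has a vanishing upper block-row, so $\mathbf{M}=\tau\mathbf{T}_d\mathbf{Q}$ inherits this zero upper block-row. A short induction on $k$ then yields
\begin{equation*}
\mathbf{M}^k = \begin{bmatrix}\mathbf{0}_n & \mathbf{0}_n\\ \mathbf{M}_{22}^{k-1}\mathbf{M}_{21} & \mathbf{M}_{22}^k\end{bmatrix},
\end{equation*}
and substituting into the principal-branch Taylor series $W(z)=\sum_{k\ge 1}\frac{(-k)^{k-1}}{k!}z^k$ (and, away from the disc of convergence or on other branches, into Definition~\ref{LambertMatrix_definition} via the Jordan form) gives $\mathbf{W}_{22}=W(\mathbf{M}_{22})$ and $\mathbf{W}_{21}=g(\mathbf{M}_{22})\,\mathbf{M}_{21}$, where $g(z):=W(z)/z$ extends analytically across $z=0$ with $g(0)=1$. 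This pins down (i).

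Next I would reduce (ii) and (iii) to symmetry and commutativity of $\mathbf{M}_{21}, \mathbf{M}_{22}$ themselves. Indeed, if $\mathbf{M}_{22}$ is symmetric then so are $W(\mathbf{M}_{22})$ and $g(\mathbf{M}_{22})$; and if $\mathbf{M}_{21}$ is also symmetric and commutes with $\mathbf{M}_{22}$, then $\mathbf{W}_{21}=g(\mathbf{M}_{22})\mathbf{M}_{21}=\mathbf{M}_{21}g(\mathbf{M}_{22})$ is symmetric and commutes with $\mathbf{W}_{22}$. To obtain the structural properties of $\mathbf{M}_{21}, \mathbf{M}_{22}$, I would pass from \eqref{eq_18} back to the characteristic equation $\mathbf{S}=\mathbf{T}+\mathbf{T}_d e^{-\mathbf{S}\tau}$, observe that the zero upper block-row of $\mathbf{T}_d$ forces $\mathbf{S}$ into the companion-type form $\begin{bmatrix}\mathbf{0}&\mathbf{I}_n\\ \mathbf{S}_{21}&\mathbf{S}_{22}\end{bmatrix}$, and then work in the orthogonal eigenbasis of the Laplacian $\mathbf{L}=\mathbf{D}-\mathbf{A}$. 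Under the similarity $\mathbf{U}\oplus\mathbf{U}$, the matrix equation for $\mathbf{S}_{21}, \mathbf{S}_{22}$ decouples into $n$ independent scalar Lambert-$W$ equations indexed by the Laplacian eigenvalues, so the natural branch of each block is a scalar function of $\mathbf{L}$. Pulling back, $\mathbf{S}_{21}$ and $\mathbf{S}_{22}$ are symmetric matrices lying in the commutative polynomial algebra generated by the symmetric $\mathbf{L}$, and hence commute with each other.

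The ``as well as'' clause for $\mathbf{W}_{21}-\tau\mathbf{D}$ and $\mathbf{W}_{22}-\tau\gamma\mathbf{D}$ then follows at once, since these matrices equal $\tau\mathbf{S}_{21}$ and $\tau\mathbf{S}_{22}$ respectively and therefore inherit symmetry and mutual commutation from the previous step. The main obstacle I anticipate is justifying rigorously that the matrix Lambert-$W$ branch picked out by \eqref{w_M} delivers $\mathbf{S}_{21}, \mathbf{S}_{22}$ inside the polynomial algebra of $\mathbf{L}$, as opposed to a larger algebra in which $\mathbf{D}$ and $\mathbf{A}$ act separately. I would handle this by performing the computation in the eigenbasis of $\mathbf{L}$ explicitly, checking that the ansatz expressing $\mathbf{S}_{21}, \mathbf{S}_{22}$ as functions of $\mathbf{L}$ is consistent with the decoupled $2\times 2$ characteristic equations on each spectral subspace, and appealing to uniqueness of the Lambert-$W$ root on each branch to conclude that this ansatz is exactly the branch the lemma is concerned with.
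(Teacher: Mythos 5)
Your first step---the block lower-triangular shape, $\mathbf{W}_{22}=W(\mathbf{M}_{22})$ and $\mathbf{W}_{21}=g(\mathbf{M}_{22})\,\mathbf{M}_{21}$ with $g(z)=W(z)/z$---is sound and in fact more explicit than what the paper records at this point, and the identification $\mathbf{W}_{21}-\tau\mathbf{D}=\tau\mathbf{S}_{21}$, $\mathbf{W}_{22}-\tau\gamma\mathbf{D}=\tau\mathbf{S}_{22}$ via (\ref{eq_18}) is correct. The genuine gap is in the step you yourself flag as the main obstacle, and your proposed fix does not close it. You want to conjugate the fixed-point equation $\mathbf{S}=\mathbf{T}+\mathbf{T}_d e^{-\mathbf{S}\tau}$ by $\mathbf{U}\oplus\mathbf{U}$, where $\mathbf{U}$ diagonalizes $\mathbf{L}=\mathbf{D}-\mathbf{A}$, and decouple it into $n$ scalar Lambert $W$ equations. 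That works for commensurate delays, where only the combination $\mathbf{D}-\mathbf{A}$ enters the dynamics. Here the delay is non-commensurate: the instantaneous part $\mathbf{T}$ carries $\mathbf{D}$ alone and the delayed part $\mathbf{T}_d$ carries $\mathbf{A}$ alone. For a general connected graph $\mathbf{D}$ and $\mathbf{A}$ do not commute, they have no common orthogonal eigenbasis, and neither is a polynomial in $\mathbf{L}$; so conjugating by $\mathbf{U}\oplus\mathbf{U}$ does not diagonalize both blocks, the equation does not decouple, and there is no \emph{a priori} reason for $\mathbf{S}_{21},\mathbf{S}_{22}$ to lie in the commutative algebra generated by $\mathbf{L}$. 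Appealing to uniqueness of the Lambert $W$ root on each branch cannot rescue this, because the scalar equations you would be solving are not equivalent to the matrix equation in the first place. This is precisely the feature that makes the non-commensurate case hard, as the paper emphasizes in the introduction.

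The paper takes a different route: it works directly with $\mathbf{W}e^{\mathbf{W}+\tau\mathbf{T}}=\tau\mathbf{T}_d$, uses Cayley--Hamilton to replace the exponential series by the finite sum (\ref{eq_28}), and then argues by induction on the powers appearing in that finite sum that the resulting block identity holds if and only if the claimed symmetry and commutation properties do. That induction is itself omitted ``due to space constraints,'' so the paper's own proof is also a sketch; but its strategy at least confronts the non-commutativity of $\mathbf{D}$ and $\mathbf{A}$ directly, whereas your diagonalization step silently assumes it away. To repair your argument you would need either to restrict to regular graphs (where $\mathbf{D}=d\,\mathbf{I}_n$ commutes with everything and your spectral decoupling is valid) or to replace the decoupling with a direct structural argument on the finite polynomial identity, as the paper attempts.
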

\begin{proof}
    Let $\mathbf{W} = \mathbf{W}(\mathbf{M})$ for simplicity, and suppose $         \mathbf{W}=
         \begin{bmatrix}
             \mathbf{W}_{11} &  \mathbf{W}_{12}\\
             \mathbf{W}_{21} &  \mathbf{W}_{22},
        \end{bmatrix},
    $ where each block in $\mathbf{W}$ is an unknown $n\times n$ matrix. Given the structure of $\mathbf{M}_k$, it can be inferred that $\mathbf{W}_{11}=\mathbf{W}_{12}=\mathbf{0}_{n}$ (using $\ref{definition_jordanform}$).
    In (\ref{w_M}), we can write the exponential term as a series expansion: 
    \begin{equation}\label{eq_23}
        \mathbf{W}\sum_{m=0}^{+\infty}\frac{(\mathbf{W}+\mathbf{T})^m}{m!} = \tau \mathbf{T}_d.
    \end{equation}
From the definition of $\mathbf{T}$ and $\mathbf{T}_d$ and by simplifying both sides of (\ref{eq_23}), one can write
{\small{\begin{align}
   \label{matrix_series}
   &\begin{bmatrix}       
       \mathbf{0}_n &  \mathbf{0}_n\\
       \mathbf{W}_{21} &  \mathbf{W}_{22}
   \end{bmatrix}
   \sum_{m=0}^{+\infty} \frac{1}{m!}
   \begin{bmatrix}
       \mathbf{0}_n & \tau\mathbf{I}_n\\
       \mathbf{W}_{21}-\tau\mathbf{D} &  \mathbf{W}_{22}-\tau\gamma \mathbf{D}
   \end{bmatrix}^m \nonumber \\
   &= \tau 
   \begin{bmatrix}
      \mathbf{0}_n&\mathbf{0}_n\\
      \mathbf{A}& \gamma \mathbf{A}
   \end{bmatrix}.
\end{align}}} 
The Cayley-Hamilton theorem allows to write $(\mathbf{W}+\tau\mathbf{T})^m$ \textcolor{\COLOR}{with $m\geq 2n$} as a linear combination of its lower powers:
{\small{ \begin{equation}
\begin{split}\label{eq_27}  (\mathbf{W}+\tau\mathbf{T})^{m}&=\sum_{j=0}^{ \textcolor{\COLOR}{2n-1}} c_{m j} (\mathbf{W}+\tau\mathbf{T})^j\\
    &= \sum_{j=0}^{ \textcolor{\COLOR}{2n-1} } c_{m j} \begin{bmatrix}
             \mathbf{0}_n& \tau\mathbf{I}_n\\
             \mathbf{W}_{21}-\tau\mathbf{D} &  \mathbf{W}_{22}-\tau\gamma \mathbf{D}
         \end{bmatrix}^{\textcolor{\COLOR}{j}},
\end{split}
\end{equation}}}where $c_{m j}$ are constant coefficients.
Using (\ref{eq_27}), the authors of \cite{Moler} prove that the matrix term of the infinite sum in (\ref{matrix_series}) can be expressed as a finite sum 
 \begin{align} \label{eq_28}
e^{(\mathbf{W}+\tau\mathbf{T})}=\sum_{m=0}^{\infty} \frac{(\mathbf{W}+\tau\mathbf{T})^m}{m!} & \equiv \sum_{j=0}^{2n-1} b_j (\mathbf{W}+\tau\mathbf{T})^j,
\end{align}
where $\textcolor{\COLOR}{b_j}$ are strictly positive constants defined by $
\textcolor{\COLOR}{b_j=\frac{1}{j!}+\sum_{m=2n}^{\infty}\frac{c_{mj}}{m!}.}$

The goal of turning (\ref{eq_28}) into a finite sum was to allow the use of proof by induction. It can be proven by induction that (\ref{matrix_series}) holds true if and only if:
\begin{enumerate}
    \item $\mathbf{W}_{21}$ and $\mathbf{W}_{22}$ are symmetric and commute.
    \item The matrices $\mathbf{W}_{21}-\tau\mathbf{D}$ and $\mathbf{W}_{22}-\tau\gamma \mathbf{D}$ commute.
\end{enumerate}
The full proof is omitted here due to space constraints.
\end{proof}

\begin{lemma}\label{Lemma3}
Let $m_i$ represent the $i$-th eigenvalue of $\mathbf{M}_{22}$,
and let $\mathbf{Z} =\begin{bmatrix}
\mathbf{Z}_1&\mathbf{Z}_3\\    \mathbf{Z}_2&\mathbf{Z}_4
    \end{bmatrix} \in \mathbb{R}^{2n\times 2n}$ be an invertible matrix containing the eigenvectors of $\mathbf{W}(\mathbf{M})$.
 Then, $\mathbf{W}(\mathbf{M})$ can be written as 
{\small{\begin{equation}\label{lemma_3}
    \mathbf{W}(\mathbf{M}) = 
    \mathbf{Z}
    \begin{bmatrix}
        \mathbf{0}_n & \mathbf{0}_n\\
        \mathbf{0}_n &  \mathbf{W}(m)
    \end{bmatrix}
    \mathbf{Z}^{-1},
\end{equation}}}where $\mathbf{W}(m)=\mathrm{diag}(W(m_1),W(m_2),\cdots,W(m_n))$.  This $ \mathbf{W}(\mathbf{M})$ has a representation such as (\ref{symmetric_lemma}).

\end{lemma}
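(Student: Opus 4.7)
The plan is to apply Definition \ref{LambertMatrix_definition} directly, exploiting the block lower-triangular structure of $\mathbf{M}$ to read off its Jordan form, and then show that the resulting eigenvector arrangement yields precisely the claimed factorization. First, I would characterize the spectrum of $\mathbf{M}$. Since $\mathbf{M} = \begin{bmatrix} \mathbf{0}_n & \mathbf{0}_n \\ \mathbf{M}_{21} & \mathbf{M}_{22}\end{bmatrix}$ is block lower triangular, $\det(\mathbf{M} - s\mathbf{I}_{2n}) = \det(-s\mathbf{I}_n)\det(\mathbf{M}_{22} - s\mathbf{I}_n)$, so the spectrum consists of the zero eigenvalue with algebraic multiplicity $n$ and the eigenvalues $m_1,\ldots,m_n$ of $\mathbf{M}_{22}$.

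Next, I would verify that the zero eigenvalue is semisimple. Assuming $\mathbf{M}_{22}$ is invertible (which holds whenever none of the $m_i$ vanish), solving $\mathbf{M} \mathbf{v} = \mathbf{0}$ reduces to $\mathbf{M}_{21}\mathbf{v}_1 + \mathbf{M}_{22}\mathbf{v}_2 = \mathbf{0}$, yielding $\mathrm{Nullity}(\mathbf{M})=n$ independent eigenvectors of the form $[\mathbf{v}_1^T, -(\mathbf{M}_{22}^{-1}\mathbf{M}_{21}\mathbf{v}_1)^T]^T$, so the geometric and algebraic multiplicities of $0$ coincide. For the nonzero eigenvalues, writing the eigenvector as $[\mathbf{w}_1^T,\mathbf{w}_2^T]^T$, the top block of $\mathbf{M}\mathbf{w}=m_i\mathbf{w}$ forces $\mathbf{w}_1=\mathbf{0}$, and the bottom block reduces to $\mathbf{M}_{22}\mathbf{w}_2 = m_i \mathbf{w}_2$. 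Diagonalizability of $\mathbf{M}_{22}$ would then be justified by invoking Lemma \ref{lemma_2}: since $\mathbf{W}_{22}$ is symmetric and $\mathbf{M}_{22}$ is a polynomial in the commuting blocks of $\mathbf{W}$ and $\mathbf{T}$ via the series expansion in (\ref{eq_23}), $\mathbf{M}_{22}$ shares an orthogonal eigenbasis with $\mathbf{W}_{22}$, giving $n$ independent nonzero eigenvectors of the form $[\mathbf{0}^T,\mathbf{u}_i^T]^T$.

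With these ingredients I would assemble $\mathbf{Z}$: the first $n$ columns are the null-space eigenvectors $[\mathbf{Z}_1^T, \mathbf{Z}_2^T]^T$ with $\mathbf{Z}_2 = -\mathbf{M}_{22}^{-1}\mathbf{M}_{21}\mathbf{Z}_1$, and the last $n$ columns collect the nonzero eigenvectors stacked as $[\mathbf{0}^T, \mathbf{Z}_4^T]^T$, so that the upper-right block $\mathbf{Z}_3 = \mathbf{0}$. The Jordan form of $\mathbf{M}$ then becomes the block diagonal matrix $\mathbf{J} = \mathrm{diag}(\mathbf{0}_n,\mathrm{diag}(m_1,\ldots,m_n))$. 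Definition \ref{LambertMatrix_definition} applied block-wise, together with $W(0)=0$ on the principal branch, immediately gives $\mathbf{W}(\mathbf{J}) = \mathrm{diag}(\mathbf{0}_n, \mathbf{W}(m))$, and $\mathbf{W}(\mathbf{M}) = \mathbf{Z}\mathbf{W}(\mathbf{J})\mathbf{Z}^{-1}$ yields (\ref{lemma_3}). A short block multiplication using $\mathbf{Z}_3 = \mathbf{0}$ and the block inverse of $\mathbf{Z}$ confirms that the upper blocks of $\mathbf{W}(\mathbf{M})$ vanish, reproducing the structure of (\ref{symmetric_lemma}).

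The main obstacle I anticipate is the rigorous justification that $\mathbf{M}_{22}$ is diagonalizable and that $0$ is semisimple for $\mathbf{M}$ itself. The diagonalizability step is delicate because it must borrow the symmetry information from Lemma \ref{lemma_2} through the nonlinear relation $\mathbf{W}(\mathbf{M})e^{\mathbf{W}(\mathbf{M})+\tau\mathbf{T}} = \tau\mathbf{T}_d$; showing that eigenspaces transfer cleanly across this relation, and that the principal-branch choice $W(0) = 0$ is the correct selection to rule out spurious Jordan blocks associated with the zero eigenvalue, is the subtle technical core on which the rest of the argument relies.
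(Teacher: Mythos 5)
Your proposal is correct and follows essentially the same strategy as the paper: both proofs establish diagonalizability by showing the zero eigenvalue is semisimple (a nullity/multiplicity count that hinges on $\mathbf{M}_{22}$ being invertible, which the paper traces back to the assumed invertibility of $\mathbf{A}$) and by borrowing the symmetry of $\mathbf{W}_{22}$ from Lemma \ref{lemma_2} to handle repeated nonzero eigenvalues, with the branch choice $W_0(0)=0$ fixed by the same hybrid-branch convention. The only real difference is one of bookkeeping: you diagonalize $\mathbf{M}$ and push the result through Definition \ref{LambertMatrix_definition}, explicitly constructing the eigenvectors and obtaining $\mathbf{Z}_3=\mathbf{0}_n$ already at this stage, whereas the paper argues diagonalizability of $\mathbf{W}(\mathbf{M})$ directly and defers the $\mathbf{Z}_3=\mathbf{0}_n$ conclusion to the subsequent lemma.
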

\begin{proof}
    Revisiting the branch notation, $\mathbf{M}$ has at least $n$ zero eigenvalues. As $W_k(0)$ is not defined for $k\neq0$, the hybrid branch method from \cite{sunyi} is applied, as in \cite{counter-arg2}. In this approach, whenever there is a zero eigenvalue, the branch is set as $k=0$ so that $W(0)$ is well defined. Next, a matrix is diagonalizable if and only if all of its eigenvalues are distinct; or, if there are repeated eigenvalues, their algebraic and geometric multiplicities must be equal.

    Let $\Lambda_w$ and $\Lambda_m$ represent the \textcolor{\COLOR}{the set of eigenvalues} of $\mathbf{W}(\mathbf{M})$ and $\mathbf{M}$ respectively and let $\lambda(\mathbf{W}_{22})$, $\lambda(\mathbf{M}_{22})$ and $\lambda(\mathbf{0}_n)$ represent the \textcolor{\COLOR}{the set of all eigenvalues} of $\mathbf{W}_{22}$, $\mathbf{M}_{22}$ and $\mathbf{0}_n$. Thus, one can write 
    \begin{equation}
    \begin{split}
        \Lambda_m = \lambda(\mathbf{0}_n) \cup \lambda(\mathbf{M}_{22}), \hspace{3mm}
        \Lambda_w = \lambda(\mathbf{0}_n) \cup \lambda(\mathbf{W}_{22}).
    \end{split}
\end{equation}
This is due to the structure of both matrices in (\ref{MK}) and (\ref{symmetric_lemma}). Given the matrix function definition \cite{Higham}, it is known that the eigenvalues of $\mathbf{W}(\mathbf{M})$ are $W(m_i)$, where $W$ denotes the scalar function from Section \ref{sec3-C}.

Then $\lambda(\mathbf{W}_{22})=\{W(m_1),W(m_2),\cdots,W(m_n)\}$. Since $\mathbf{A}$ is invertible and $\mathbf{M}_{22}$ is proportional to $\mathbf{A}$ ($\mathbf{Q}$ does not affect the invertibility of $\mathbf{A}$), $\mathbf{M}_{22}$ is also invertible and therefore does not have any zero eigenvalue. The algebraic multiplicity $\mu(0)$ of zero eigenvalues of $\mathbf{W}(\mathbf{M})$ is then $\mu(0)=n$, so $\mu(m) = 2n-n=n$ is the algebraic multiplicity of the nonzero eigenvalues. Then, the geometric multiplicity of $0$ is
\begin{equation}
\begin{split}
    \gamma(0) &= \mathrm{Nullity}(\mathbf{W}) \\ & = 2n - \mu(m) = \mu(0).
\end{split}
\end{equation}

The matrix $\mathbf{W}_{22}$ is symmetric, and if it has repeated eigenvalues their algebraic and geometric multiplicities are equal, i.e., $\gamma(W(m_i))=\mu(W(m_i))$ if $m_i$ is not distinct. We can conclude that $\mathbf{W}(\mathbf{M})$ is diagonalizable and can be written as (\ref{lemma_3}).  This completes the proof.
\end{proof}

\begin{lemma}
The matrix $\mathbf{W}(\mathbf{M})$ has the following form:
{\small{\begin{equation}
     \mathbf{W}(\mathbf{M})= \begin{bmatrix}
        \mathbf{0}_n & \mathbf{0}_n\\
        - \mathbf{W}(\mathbf{M}_{22})\mathbf{Z}_1\mathbf{Z}_2^{-1}&  \mathbf{W}(\mathbf{M}_{22})
    \end{bmatrix}.
    \end{equation}}}
\end{lemma}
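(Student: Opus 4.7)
The plan is to apply Lemma~\ref{Lemma3} directly and obtain the block form of $\mathbf{W}(\mathbf{M})$ by explicitly carrying out the similarity transformation. Lemma~\ref{Lemma3} supplies the factorization $\mathbf{W}(\mathbf{M}) = \mathbf{Z}\,\mathbf{J}_W\,\mathbf{Z}^{-1}$, where $\mathbf{J}_W$ is the block diagonal matrix whose $(1,1)$ block is $\mathbf{0}_n$ and whose $(2,2)$ block is $\mathbf{W}(m)$. Once the blocks of $\mathbf{Z}$ are characterized, the desired identity follows from one explicit block-matrix multiplication.

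First I would analyze the eigenvectors of $\mathbf{M}$ (which coincide with those of $\mathbf{W}(\mathbf{M})$ by Definition~\ref{LambertMatrix_definition}) to pin down the blocks of $\mathbf{Z}$. For an eigenvector $[u;\,w]^T$ associated with a nonzero eigenvalue $m_i$, the top block row of $\mathbf{M}$ forces $u=\mathbf{0}$ and reduces the remaining equation to $\mathbf{M}_{22}w = m_i w$; hence $\mathbf{Z}_3=\mathbf{0}$ and the columns of $\mathbf{Z}_4$ are precisely eigenvectors of $\mathbf{M}_{22}$, so that $\mathbf{Z}_4\,\mathbf{W}(m)\,\mathbf{Z}_4^{-1}=\mathbf{W}(\mathbf{M}_{22})$ by the matrix-function definition. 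For the zero eigenvalue, $u$ may be chosen freely and $w$ is then determined by $\mathbf{M}_{21}u+\mathbf{M}_{22}w=\mathbf{0}$ (using the invertibility of $\mathbf{M}_{22}$ inherited from that of $\mathbf{A}$). The geometric multiplicity $n$ of the zero eigenvalue established in Lemma~\ref{Lemma3} guarantees that $\mathbf{Z}_1$ can be taken invertible, with $\mathbf{Z}_2$ following accordingly.

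Next, with $\mathbf{Z}_3=\mathbf{0}$ the matrix $\mathbf{Z}$ is block lower-triangular, so $\mathbf{Z}^{-1}$ admits the standard closed form with a $-\mathbf{Z}_4^{-1}\mathbf{Z}_2\mathbf{Z}_1^{-1}$ entry in the lower-left position. I would then multiply $\mathbf{Z}\,\mathbf{J}_W\,\mathbf{Z}^{-1}$ block by block. The factor $\mathbf{J}_W$ annihilates the first block column of $\mathbf{Z}$ and the first block row of $\mathbf{Z}^{-1}$, so the entire top block row of the product vanishes---consistent with Lemma~\ref{lemma_2}. What remains simplifies via $\mathbf{Z}_4\mathbf{W}(m)\mathbf{Z}_4^{-1}=\mathbf{W}(\mathbf{M}_{22})$, producing $\mathbf{W}(\mathbf{M}_{22})$ in the $(2,2)$ block and the claimed product form in the $(2,1)$ block.

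The main obstacle is the bookkeeping around the off-diagonal entry of $\mathbf{Z}^{-1}$ and the verification that the blocks of $\mathbf{Z}$ appearing in the final product are genuinely invertible; both rest on the geometric multiplicity of the zero eigenvalue of $\mathbf{M}$ being $n$, already secured in Lemma~\ref{Lemma3}, so that the free top components $u$ of the zero-eigenspace basis may be chosen to span $\mathbb{R}^n$. Beyond this, the argument is routine block-matrix arithmetic combined with consistent use of the matrix-function definition.
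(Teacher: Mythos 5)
Your proposal is correct and follows essentially the same route as the paper: both arguments start from the diagonalization $\mathbf{W}(\mathbf{M})=\mathbf{Z}\,\mathrm{diag}(\mathbf{0}_n,\mathbf{W}(m))\,\mathbf{Z}^{-1}$ of Lemma~\ref{Lemma3}, establish $\mathbf{Z}_3=\mathbf{0}_n$ and the invertibility of $\mathbf{Z}_1,\mathbf{Z}_4$, identify $\mathbf{Z}_4\mathbf{W}(m)\mathbf{Z}_4^{-1}=\mathbf{W}(\mathbf{M}_{22})$, and finish by block arithmetic (the paper compares blocks in $\mathbf{W}(\mathbf{M})\mathbf{Z}=\mathbf{Z}\,\mathrm{diag}(\mathbf{0}_n,\mathbf{W}(m))$, whereas you form $\mathbf{Z}^{-1}$ explicitly, a cosmetic difference). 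Note only that your explicit multiplication yields the $(2,1)$ block $-\mathbf{W}(\mathbf{M}_{22})\mathbf{Z}_2\mathbf{Z}_1^{-1}$, which agrees with the paper's own proof but not with the order $\mathbf{Z}_1\mathbf{Z}_2^{-1}$ printed in the lemma statement --- an internal inconsistency of the paper rather than a flaw in your argument.
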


\begin{proof}
From (\ref{symmetric_lemma}), and by multiplying (\ref{lemma_3}) by $\mathbf{Z}$ on the left, one can write
\begin{equation}\label{eq_33}
    \begin{bmatrix}
        \mathbf{0}_n & \mathbf{0}_n\\
        \mathbf{W}_{21} &  \mathbf{W}_{22}
    \end{bmatrix}
     \begin{bmatrix}
       \mathbf{Z}_1&\mathbf{Z}_3\\
       \mathbf{Z}_2&\mathbf{Z}_4
    \end{bmatrix}
    =
    \begin{bmatrix}
       \mathbf{Z}_1&\mathbf{Z}_3\\
       \mathbf{Z}_2&\mathbf{Z}_4
    \end{bmatrix}
     \begin{bmatrix}
        \mathbf{0}_n & \mathbf{0}_n\\
        \mathbf{0}_n &  \mathbf{W}(m)
    \end{bmatrix}.
    \end{equation}
   Simplifying (\ref{eq_33}):
 {\small{      \begin{equation}
     \begin{bmatrix}     \mathbf{0}_n&\mathbf{0}_n\\\mathbf{W}_{21}\mathbf{Z}_1+\mathbf{W}_{22}\mathbf{Z}_2&\mathbf{W}_{21}\mathbf{Z}_3+\mathbf{W}_{22}\mathbf{Z}_4
        \end{bmatrix}
        =
        \begin{bmatrix}
            \mathbf{0}_n&\mathbf{Z}_3 \mathbf{W}(m)\\
            \mathbf{0}_n&\mathbf{Z}_4 \mathbf{W}(m)
        \end{bmatrix}.
    \end{equation}}}Comparing the left hand-side and right hand-side of the equality, the following is valid
   {\small{ \begin{equation}\label{eq31}
        \begin{split}
           \mathbf{W}_{21}\mathbf{Z}_1 &= -\mathbf{W}_{22}\mathbf{Z}_2,\\
        \mathbf{Z}_3 \mathbf{W}(m) & = \mathbf{0}_n,\\
        \mathbf{W}_{22}\mathbf{Z}_4&=\mathbf{Z}_4 \mathbf{W}(m) .
        \end{split}
    \end{equation}}}
    Note that \textcolor{\COLOR}{the second Equation in (\ref{eq31})} is only valid if $\mathbf{Z}_3=\mathbf{0}_n$ since $\mathbf{W}(m)$ is a diagonal matrix. Additionally, since $\mathbf{Z}$ is invertible, then
\begin{equation*}
    \mathrm{det}(\mathbf{Z}) = \mathrm{det}(\mathbf{Z_1}\mathbf{Z_4})=\mathrm{det}(\mathbf{Z_1})\mathrm{det}(\mathbf{Z_4}) \neq 0.
\end{equation*}
Therefore $\mathbf{Z_1}$ and $\mathbf{Z_4}$ must both be invertible, \textcolor{\COLOR}{so the last equation in (\ref{eq31}) can be rewritten as}
\begin{equation}
\begin{split}
        \mathbf{W}_{22} &= \mathbf{Z_4} \mathbf{W}(m) \mathbf{Z_4}^{-1}=\mathbf{W}(\mathbf{M}_{22}),
    \end{split}
\end{equation}   
and $\mathbf{W}_{21}= -\mathbf{W}(\mathbf{M}_{22})\mathbf{Z}_2\mathbf{Z}_1 ^{-1}$.
This completes the proof.
\end{proof}

\subsection{Critical Delay Bound}
To summarize the previous lemmas, the matrix $\mathbf{W}(\mathbf{M})$ from (\ref{w_M}) can be written as
\begin{equation}
    \mathbf{W}(\mathbf{M})=
     \begin{bmatrix}
        \mathbf{0}_n & \mathbf{0}_n\\
        - \bar{{\mathbf{{W}}}}(\mathbf{M}_{22})&  \mathbf{W}(\mathbf{M}_{22})
    \end{bmatrix},
\end{equation}
where $\bar{{\mathbf{{W}}}}(\mathbf{M}_{22})=\mathbf{W}(\mathbf{M}_{22})\mathbf{Z}_1\mathbf{Z}_2^{-1}$ and $\mathbf{W}(\mathbf{M}_{22})$ are symmetric. The purpose of the notation $\bar{{\mathbf{{W}}}}(\mathbf{M}_{22})$ is to remember that $\mathbf{W}_{21}$ is a function of $\mathbf{W}(\mathbf{M}_{22})$ and consequently of $\mathbf{W}_{22}$. This result is similar to the scalar case studied in \cite{counter-arg2}. It is also intuitive since the two lower blocks of the system matrix $\mathbf{T}_d$ are proportional.  Consequently, (\ref{eq_18}) can be written as 
\begin{equation}
    \mathbf{S}=
  \begin{bmatrix}
             \mathbf{0}_n & \mathbf{I}_n\\
            -\frac{1}{\tau}\bar{{\mathbf{{W}}}}(\mathbf{M}_{22})- \mathbf{D} &  \frac{1}{\tau}\mathbf{W}(\mathbf{M}_{22})-\gamma \mathbf{D}
         \end{bmatrix}.
\end{equation}
Next, a correlation between the eigenvalues of $\mathbf{S}$ and the eigenvalues of $\mathbf{W}(\mathbf{M})$ and $\mathbf{D}$ is needed. 

For the rest of the paper, the set of eigenvalues of all matrices of interest are sorted in descending order, i.e., $\lambda_1\geq\lambda_2\geq\hdots\geq\lambda_i$.

\begin{lemma} \label{lemma_5}
    Let $\lambda_i(\frac{1}{\tau}\mathbf{W}(\mathbf{M}_{22})-\gamma\mathbf{D})=\textcolor{\COLOR}{\eta_i}$ and $\lambda_i(-\frac{1}{\tau}\bar{{\mathbf{{W}}}}(\mathbf{M}_{22})-\mathbf{D})=\textcolor{\COLOR}{\mu_i}$. Then, \textcolor{\COLOR}{the multi-agent system (\ref{eq_5}) with the control input (\ref{eq_6-1})} is asymptotically stable if and only if
    \begin{equation} \label{condition}
       \textcolor{\COLOR}{\eta_1}<0,  \quad  \text{and}  \quad \textcolor{\COLOR}{\mu_1}<0.
    \end{equation}
 
\end{lemma}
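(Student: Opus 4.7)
The plan is to compute the characteristic polynomial of $\mathbf{S}$ in factored form by exploiting the symmetry and commutativity structure established in Lemma 2, and then apply a Routh--Hurwitz analysis to each resulting quadratic factor. First I would identify $\tau \mathbf{E}_0 := -\bar{\mathbf{W}}(\mathbf{M}_{22}) - \tau\mathbf{D}$ and $\tau \mathbf{F}_0 := \mathbf{W}(\mathbf{M}_{22}) - \tau\gamma\mathbf{D}$ as $\mathbf{W}_{21}-\tau\mathbf{D}$ and $\mathbf{W}_{22}-\tau\gamma\mathbf{D}$ respectively, so that Lemma 2 immediately gives that both $\mathbf{E}_0$ and $\mathbf{F}_0$ are symmetric and commute. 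Therefore they are simultaneously orthogonally diagonalizable, with eigenvalue sets $\{\mu_i\}$ and $\{\eta_i\}$.

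Next I would write
\begin{equation*}
s\mathbf{I}_{2n} - \mathbf{S} =
\begin{bmatrix} s\mathbf{I}_n & -\mathbf{I}_n \\ -\mathbf{E}_0 & s\mathbf{I}_n - \mathbf{F}_0 \end{bmatrix},
\end{equation*}
and apply the block-determinant (Silvester) formula, which is valid here because $-\mathbf{E}_0$ commutes with $s\mathbf{I}_n - \mathbf{F}_0$. This yields
\begin{equation*}
\det(s\mathbf{I}_{2n} - \mathbf{S}) = \det\bigl(s^2\mathbf{I}_n - s\mathbf{F}_0 - \mathbf{E}_0\bigr).
\end{equation*}
Using the common orthonormal eigenbasis of $\mathbf{E}_0$ and $\mathbf{F}_0$, the right-hand side factorizes as $\prod_{i=1}^{n} (s^2 - \eta_i s - \mu_i)$, so the spectrum of $\mathbf{S}$ is the union of the roots of the $n$ scalar quadratics $s^2 - \eta_i s - \mu_i = 0$.

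Finally, for each index $i$, asymptotic stability of the pair of roots of $s^2 - \eta_i s - \mu_i$ is, by Routh--Hurwitz for a monic real quadratic, equivalent to $-\eta_i>0$ and $-\mu_i>0$, i.e., $\eta_i<0$ and $\mu_i<0$. The system is asymptotically stable iff every root of the characteristic polynomial lies in the open left half-plane, which requires the condition to hold for every $i=1,\dots,n$. Because the eigenvalues are indexed in descending order, this collapses to the single pair of inequalities $\eta_1 < 0$ and $\mu_1 < 0$, giving \eqref{condition}.

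The step I expect to be the main obstacle is the commutativity justification: Lemma 2 asserts commutativity of $\mathbf{W}_{21}-\tau\mathbf{D}$ with $\mathbf{W}_{22}-\tau\gamma\mathbf{D}$ rather than directly of $\mathbf{E}_0$ with $\mathbf{F}_0$, and it is this commutativity that simultaneously (i) legitimizes Silvester's block determinant formula and (ii) permits the joint diagonalization that yields the scalar quadratic factors. Once the identification with the scaled blocks of $\mathbf{S}$ is made explicit, the rest is a routine application of block determinants and Hurwitz criteria.
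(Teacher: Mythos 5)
Your proposal is correct and follows essentially the same route as the paper: Silvester's block-determinant formula to reduce $\det(s\mathbf{I}_{2n}-\mathbf{S})$ to $\det(s^2\mathbf{I}_n - s\mathbf{F}_0 - \mathbf{E}_0)$, factorization into the quadratics $s^2-\eta_i s-\mu_i$ via the commutativity from Lemma \ref{lemma_2}, and Routh--Hurwitz on each factor. Your explicit identification of the blocks of $\mathbf{S}$ with the scaled matrices $\mathbf{W}_{21}-\tau\mathbf{D}$ and $\mathbf{W}_{22}-\tau\gamma\mathbf{D}$, and the use of a common orthonormal eigenbasis to justify both the block determinant and the pairing of $\eta_i$ with $\mu_i$, is a slightly more self-contained rendering of what the paper accomplishes by citing the commuting-Hermitian eigenvalue result and the factorization of \cite{Ren}.
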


\begin{proof}
To find the eigenvalues, start by writing the characteristic equation. Using \cite[Theorem 2]{Silvester_2000},
{\footnotesize{
\begin{equation}
\begin{aligned}\label{det}
    \mathrm{det}(s\mathbf{I}_{2n}-\mathbf{S}_k)= \mathrm{\textcolor{\COLOR}{det}} \Big{(}
    \begin{bmatrix}
        s\mathbf{I}_n & -\mathbf{I}_n\\
             \frac{1}{\tau}\bar{{\mathbf{{W}}}}(\mathbf{M}_{22})+\mathbf{D} &  s\mathbf{I}_n-\frac{1}{\tau}\mathbf{W}(\mathbf{M}_{22})+\gamma \mathbf{D}
    \end{bmatrix}\Big{)}\\
    =\mathrm{det}(s^2\mathbf{I}_n-s(\frac{1}{\tau}\mathbf{W}(\mathbf{M}_{22})-\gamma\mathbf{D})-(-\frac{1}{\tau}\bar{{\mathbf{{W}}}}(\mathbf{M}_{22})-\mathbf{D})) =0. &
\end{aligned}
\end{equation}}}Authors in (\hspace{1sp}\cite[Theorem 2.1]{Tao}) state that the eigenvalues of the sum of two Hermitian matrices that commute are the sum of the eigenvalues of their sum. Then, from Lemma \ref{lemma_2},
\begin{equation}
\begin{split}
\lambda_i((\frac{1}{\tau}\mathbf{W}(\mathbf{M}_{22})-\gamma \mathbf{D})-(-\frac{1}{\tau}\bar{{\mathbf{{W}}}}(\mathbf{M}_{22})-\mathbf{D}))
 &= \textcolor{\COLOR}{\eta_i} - \textcolor{\COLOR}{\mu_j},
\end{split}
\end{equation}
where $\textcolor{\COLOR}{\mu_j}$ is some permutation of the eigenvalues $\textcolor{\COLOR}{\mu_i}$, for $j\in\{1,\hdots,n\}$.

As demonstrated in \cite{Ren}, (\ref{det})  can be written as
\begin{equation}\label{eq_36}
   p(\mathbf{S})= \prod_{i=1}^n (s^2-\textcolor{\COLOR}{\eta_i}s-\textcolor{\COLOR}{\mu_i})=0,
\end{equation}
where $p(\mathbf{S})$ is the characteristic polynomial of $\mathbf{S}$. So, the eigenvalues of $\mathbf{S}$ are
    \begin{equation}
        s_i = \frac{\textcolor{\COLOR}{\eta_i} \pm \sqrt{\textcolor{\COLOR}{\eta_i}^2+4\textcolor{\COLOR}{\mu_i}}}{2}.
    \end{equation}
Using the Routh-Hurwitz criterion, $\mathrm{Re}(s_i)<0$ if and only if condition (\ref{condition}) is satisfied. This concludes the proof.
\end{proof}

\ifLongVersion
\textcolor{\COLOR}{
\begin{remark}\label{Remark_1}
The stability conditions established in Lemma \ref{lemma_5} are not merely technicalities but constitute the very mechanism by which consensus is achieved in our multi-agent framework. Specifically, by enforcing
\begin{equation}
\begin{split}
\max_i \lambda_i\!\Bigl(-\tfrac{1}{\tau}\,\overline{W}(M_{22}) - D\Bigr) < 0 \quad\text{and} \\ 
\max_i \lambda_i\!\Bigl(\tfrac{1}{\tau}W(M_{22}) - \gamma D\Bigr) < 0,
\end{split}
\end{equation}we ensure the asymptotic stability of the consensus dynamics. Consequently, all agents’ state differences decay exponentially to zero, and the network converges to a common trajectory—i.e., consensus is reached. Hence, the stability proof of Lemma \ref{lemma_5} directly underpins and guarantees the consensus objective of the manuscript.
\end{remark}}
\vspace{.2cm}
\else
\fi

\textcolor{\COLOR}{In order to relate the delay to stability, it is necessary to relate the condition (\ref{condition}) to the individual eigenvalues of $ \mathbf{\bar{W}}(\mathbf{M}_{22})$, $\mathbf{W}(\mathbf{M}_{22})$, and $\mathbf{D}$. The next theorem establishes a direct correlation between the numerical value of the delay and eigenvalues $\eta_i$ and $\mu_i$ on which stability depends (Lemma \ref{lemma_5}).}

\begin{theorem} \label{theorem}
    Let $\lambda_i( \mathbf{W}(\mathbf{M}_{22})) = w_i$ and $\lambda_i( \mathbf{\bar{W}}(\mathbf{M}_{22})) = \tilde{w_i}$. Then, a necessary condition for $\mathrm{Re}[s_i] < 0$ is
    \begin{equation}\label{necessary}
        \begin{cases}
            \frac{1}{\tau} w_i  < \gamma d_j,\\
            -\frac{1}{\tau}\tilde{w}_i< d_j,
        \end{cases}
    \end{equation}
    for any permutation $(i,j)\in\{1,\hdots,n\}$, such that $i+j=1+n$.
\end{theorem}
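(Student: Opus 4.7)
My plan is to reduce Theorem~\ref{theorem} to an application of Weyl's inequality on the spectrum of a sum of Hermitian matrices, using Lemma~\ref{lemma_5} as the entry point. By Lemma~\ref{lemma_5}, asymptotic stability forces $\eta_i<0$ and $\mu_i<0$ for every $i$, which, under the descending ordering convention of the paper, is equivalent to $\eta_1<0$ and $\mu_1<0$. The task is therefore to extract bounds on the spectra of the individual building blocks $\mathbf{W}(\mathbf{M}_{22})$, $\bar{\mathbf{W}}(\mathbf{M}_{22})$ and $\mathbf{D}$ that are \emph{forced} by the negativity of the top eigenvalue of $\mathbf{M}_\eta:=\tfrac{1}{\tau}\mathbf{W}(\mathbf{M}_{22})-\gamma\mathbf{D}$ and $\mathbf{M}_\mu:=-\tfrac{1}{\tau}\bar{\mathbf{W}}(\mathbf{M}_{22})-\mathbf{D}$.

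The key observation is that $\mathbf{M}_\eta$ and $\mathbf{M}_\mu$ are real symmetric: the symmetry of $\mathbf{W}(\mathbf{M}_{22})$ and of $\bar{\mathbf{W}}(\mathbf{M}_{22})$ comes from Lemma~\ref{lemma_2} combined with the identifications made in the preceding subsection, and $\mathbf{D}$ is diagonal. Because $\mathbf{W}(\mathbf{M}_{22})$ is not assumed to commute with $\mathbf{D}$, simultaneous diagonalization is unavailable, and I will instead invoke Weyl's inequality, which for two Hermitian $A,B\in\mathbb{R}^{n\times n}$ with spectra sorted in descending order yields $\lambda_1(A+B)\geq\lambda_i(A)+\lambda_{n+1-i}(B)$ for every $i\in\{1,\dots,n\}$.

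Applied to $\mathbf{M}_\eta$ with $A=\tfrac{1}{\tau}\mathbf{W}(\mathbf{M}_{22})$ and $B=-\gamma\mathbf{D}$, the $i$-th descending eigenvalue of $A$ is $w_i/\tau$, and the $(n{+}1{-}i)$-th descending eigenvalue of $-\gamma\mathbf{D}$ equals $-\gamma d_j$ for the unique degree-index $j$ picked out by the reverse ordering; this is exactly the constraint $i+j=n+1$ of the theorem statement. Combining Weyl's bound with $\eta_1<0$ gives $\tfrac{w_i}{\tau}-\gamma d_j\leq\eta_1<0$, which rearranges to the first inequality in~(\ref{necessary}). Re-running the same argument on $\mathbf{M}_\mu$ with $A=-\tfrac{1}{\tau}\bar{\mathbf{W}}(\mathbf{M}_{22})$ and $B=-\mathbf{D}$ yields $-\tfrac{\tilde{w}_i}{\tau}-d_j\leq\mu_1<0$, i.e.\ the second inequality.

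The main obstacle I anticipate is the careful bookkeeping of descending orderings under the negations and the sign of $\gamma$: the fact that the descending spectrum of $-\gamma\mathbf{D}$ is the reverse permutation of the descending spectrum of $\gamma\mathbf{D}$ is precisely what produces the constraint $i+j=n+1$, and this must be verified so that a single statement covers the admissible sign(s) of $\gamma$. A secondary, smaller technical point is to confirm that $\mathbf{W}(\mathbf{M}_{22})$ has real spectrum so that Weyl's inequality is legitimately applied in its real-symmetric form; this is immediate from the symmetry established in Lemma~\ref{lemma_2} but should be stated explicitly in the write-up.
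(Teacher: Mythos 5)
Your proposal is essentially the paper's own proof: it reduces the claim to $\eta_1<0$ and $\mu_1<0$ via Lemma \ref{lemma_5} and then applies Weyl's inequality in the form $\lambda_1(A+B)\ge\lambda_i(A)+\lambda_{n+1-i}(B)$ to each of $\tfrac{1}{\tau}\mathbf{W}(\mathbf{M}_{22})-\gamma\mathbf{D}$ and $-\tfrac{1}{\tau}\bar{\mathbf{W}}(\mathbf{M}_{22})-\mathbf{D}$, which is exactly the lower-bound step the paper performs (the paper additionally name-drops the Horn conjecture, but the operative inequality is the same Weyl bound with the $i+j=n+1$ pairing). The ordering bookkeeping you flag as the remaining obstacle is indeed the only delicate point, and it is the same point left implicit in the paper's proof.
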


\ifLongVersion
\begin{proof}
If condition (\ref{condition}) is satisfied, it is necessary that the lower bounds of $\textcolor{\COLOR}{\eta_i} ,\textcolor{\COLOR}{\mu_i}$ also satisfy (\ref{condition}).
The rest of the proof relies heavily on a combination of Weyl's inequality and Horn's conjecture. The conjecture was recently proved by Terence Tao and others \cite{honey}, making it a theorem.  
The lower bounds on $\textcolor{\COLOR}{\eta_1}$ and $\textcolor{\COLOR}{\mu_1}$ are 
\begin{equation}
    \begin{split}
       \frac{1}{\tau} w_i-\gamma d_j \leq \textcolor{\COLOR}{\eta_1},\\
         -\frac{1}{\tau}\tilde{w}_i - d_j \leq \textcolor{\COLOR}{\mu_1},
    \end{split}
\end{equation}
where $i+j =1+n$, with $i,j\in\{1,\hdots,n\}$ \cite{taotopics}. Then, the following condition is imposed on the individual eigenvalues
\begin{equation}
    \begin{cases}
        \frac{1}{\tau} w_i<\gamma d_j,\\
        -\frac{1}{\tau}\tilde{w}_i< d_j.
    \end{cases}
\end{equation}
This concludes the proof.
\end{proof}
\else
\begin{proof}
The complete proof is given \cite{Arxiv}.
\end{proof}
\fi

Another interesting result is the upper bound that can be imposed on the eigenvalues $s_i$ using Weyl's standard inequality (\hspace{1sp}\cite[Equation (1.54)]{taotopics}):
\textcolor{\COLOR}{An equivalent converse of Theorem \ref{theorem} can also be proven, i.e., if $\tau$ satisfies the corresponding condition, then there is stability.} 

Back to the branch notation, if $m_i$ are real, then Lemma \ref{lemma_1} can be used to find the maximum eigenvalues $w_1$, using only $k=0,-1$. Since $\bar{{\mathbf{{W}}}}(\mathbf{M}_{22})$ and $\mathbf{{W}}(\mathbf{M}_{22})$ are symmetric, then their eigenvalues are all real. The only real branches of the Lambert W function are $k=0,-1$, so these eigenvalues must correspond to either $W_0$ or $W_{-1}$. This implies that $\lambda(\mathbf{M}) \in \mathbb{R}$, so Lemma \ref{lemma_1} can be used to find the maximum eigenvalues. Additionally, $\lambda(\mathbf{M})=m_i$ is in the domain of the $W_k$, for $k=0,-1$.

The critical delay $\tau^*$ therefore corresponds to either $W_0(m_1)$ or $W_{-1}(m_1)$, depending on the numerical value of $m_1$. Note that $W_0(m_1)$ and $W_{-1}(m_1)$ are functions of $\tau$ since $m_i$ is directly proportional to $\tau$. Finally, it is worthwhile to make the following remarks:
\begin{enumerate}
    \item If $m_i\in[-1/e, 0)$, then as $\tau\to 0$, $w_i=W_{-1}(m_i)\to -\infty$, ensuring the first condition of (\ref{necessary}) is not violated.
    \item If $m_i\in [-1/e, +\infty)$, then as $\tau\to +\infty$, $W_{0}(m_i)\to +\infty$, violating the first condition of (\ref{necessary}), rendering the system not stable. And as $\tau \to 0$, $W_{0}(m_i)\to 0$ ensuring the second condition of (\ref{necessary}) is not violated.
\end{enumerate} 
Clearly, there is a direct correlation between the stability of the system and the time delay. However, Lemma \ref{lemma_5} is needed to find the critical delay $\tau^*$.

\section{Numerical Example}\label{sec5}

This section presents a numerical example to validate the theoretical results. The parameter $\gamma$ is set to $1$. \textcolor{\COLOR}{Different topologies with different numbers of agents are considered to showcase the scalability of the approach.}

\begin{figure}[htbp]
    \centering
    \begin{subfigure}[b]{0.21\textwidth}
        \includegraphics[width=1\textwidth]{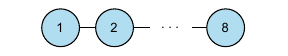}
        \caption{}
        \label{fig:cycle}
    \end{subfigure}
    \
    \begin{subfigure}[b]{0.21\textwidth}
        \includegraphics[width=1\textwidth]{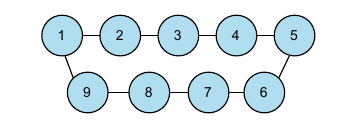}
        \caption{}
        \label{fig:line}
    \end{subfigure}
    \vspace{1em}  
    \begin{subfigure}[b]{0.21\textwidth} 
        \centering
        \includegraphics[width=1\textwidth]{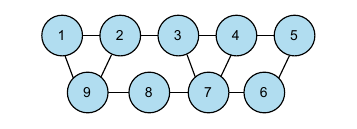}
        \caption{}
        \label{fig:rand}
    \end{subfigure}\vspace{-2mm}
    \caption{Three examples of connected graphs. \vspace{-2mm}}
    \label{fig:two_rows}
\end{figure}Fig. \ref{fig:two_rows} shows three different network topologies exhibiting an invertible adjacency matrix.

The first step is to find the matrix $\mathbf{Q}$ from (\ref{w_M}).  The authors in \cite{counter-arg2} provided an effective method to find a correct solution using \texttt{fsolve} in MATLAB. The need for a \textit{posteriori} analysis is common for this type of delay (\hspace{1sp}\cite{RAM20091940}). The method is based on finding a structured initial guess for \texttt{fsolve} such that the algorithm converges to the correct solution. The initial guess $\mathbf{Q}_{0k}$ is found by reverse engineering the steps listed in Section \ref{sec4_A}. \ifLongVersion{} \else \textcolor{\COLOR}{A pseudocode summarizing the steps in finding the delay bound $\tau^*$ is included in the extended version \cite{Arxiv}}.\fi

\ifLongVersion
\begin{algorithm}[h]
\caption{Iterative Eigenvalue-Based Stability Verification}
\label{alg:stability}
\KwIn{Adjacency matrix $\mathbf{A}$, degree matrix $\mathbf{D}$ scalar parameter $\gamma$}
\KwOut{Matrix $\mathbf{Q}_k$, matrix $\mathbf{W}_k$ satisfying desired eigenvalue conditions, final $\tau^*$}

\textbf{Step 1: Initialization} \\
\quad Compute $\mathbf{T}$ and $\mathbf{T}_d$ from $\mathbf{A}$, $\mathbf{D}$ and $\gamma$, and set $\Delta \tau$ = 0.001.

\textbf{Step 2: Eigenvalue Estimation} \\
\quad Use the QpmR algorithm~\cite{Zitek} to compute eigenvalues of the system defined in (\ref{eq_9_10}), using an arbitrary initial guess for $\tau$. 

\textbf{Step 3: Initial Matrix Construction} \\
\quad Construct $\mathbf{S}_0$ from eigenvalues obtained in Step 2. \\
\quad Construct $\mathbf{Q}_0$ using $\mathbf{S}_0$.

\textbf{Step 4: Root Finding} \\
\quad Use MATLAB's \texttt{fsolve} to solve for $\mathbf{Q}_k$ using $\mathbf{Q}_0$ as the initial guess.

\textbf{Step 5: Iterative Stability Search} \\
\While{$\eta_1 < 0$ \textbf{and} $\mu_1 < 0$}{
    Compute $\mathbf{M}_k = \tau \mathbf{T}_d \mathbf{Q}_k$. \\
    Construct matrix $\mathbf{Z}$ from eigenvectors and eigenvalues of $\mathbf{M}_k$. \\
Compute {\footnotesize{$\mathbf{W}(m) = \mathrm{diag}(W(m_1), W(m_2), \cdots, W(m_n))$}}. \\
Compute $\mathbf{W}(\mathbf{M}) = 
\mathbf{Z}
\begin{bmatrix}
    \mathbf{0}_n & \mathbf{0}_n\\
    \mathbf{0}_n &  \mathbf{W}(m)
\end{bmatrix}
\mathbf{Z}^{-1}$. \\

Compute $\bar{\mathbf{W}}(\mathbf{M}_{22}) = \mathbf{W}(\mathbf{M}_{22}) \mathbf{Z}_1 \mathbf{Z}_2^{-1}$. \\

Evaluate $\mu_1 = \max_i \lambda_i\left(-\frac{1}{\tau} \bar{\mathbf{W}}(\mathbf{M}_{22}) - \mathbf{D}\right)$. \\

Evaluate $\eta_1 = \max_i \lambda_i\left(\frac{1}{\tau} \mathbf{W}(\mathbf{M}_{22}) - \gamma \mathbf{D}\right)$. \\
    Update $\tau = \tau + \Delta \tau$.
}
\end{algorithm}

The above algorithm is applied to all three different topologies in Fig. \ref{fig:two_rows}. 
\else
\fi

\textcolor{\COLOR}{
To verify the validity of the proposed approach, a comparison is done against the delay bound found in \cite{Hou}. The system considered in \cite{Hou} is the same as (\ref{eq_5}), and the consensus protocol is the same as (\ref{eq_6}), except that in \cite{Hou}, the authors consider a commensurate delay (same delay on all agents including agent $i$). 
We denote our delay bound by $\bar{\tau}_1$, \ifLongVersion found using Algorithm \ref{alg:stability} and \else \fi we use $\bar{\tau}_2$ to denote the delay bound found in \cite{Hou}.}

\vspace{-4mm}
\textcolor{\COLOR}{\begin{table}[h] 
\centering 
\caption{\textcolor{\COLOR}{Comparison of the allowable time-delay bounds.
 }}\label{Tau_table}
 \begin{tabular}{|c|c c c|} 
 \hline
Delay bound  & Fig. \ref{fig:cycle} & Fig. \ref{fig:line} & Fig. \ref{fig:rand}\\
\hline
$\bar{\tau}_1 (\text{Lemma} \ \ref{lemma_5}) $ & 0.362   &  0.339 &  0.326 \\ 
$\bar{\tau}_2$ (\cite{Hou}) & 0.331 & 0.333 & 0.270\\
 \hline
 \end{tabular}
 \end{table}}

\textcolor{\COLOR}{It is observed from Table \ref{Tau_table} that the delay bounds are close in value, validating the results further.
Next, as a way to measure the energy efficiency of the control law, the following energy function is considered
\begin{equation} \label{energy_func}
    \rho (t)=\sum_{i=1}^{n} u^2_i(t).
\end{equation}
The function (\ref{energy_func}) can be used as a performance index (\hspace{1sp}\cite{kiarash}) to evaluate the efficiency for each of control law (\ref{eq_6-1}) and the control law from \cite{Hou}.}

\textcolor{\COLOR}{Let us define the energy index $\nu$ as follows:
 \begin{equation}
     \nu = \int_{0}^{t_f} \rho (t) dt.
 \end{equation}
To quantify performance, $\nu$ is computed for the three topologies in Fig. ~\ref{fig:two_rows} with $t_f =6$. Let $\nu_1$ and $\nu_2$ correspond to the proposed control law (\ref{eq_6-1}) and the method in \cite{Hou}, respectively. Table~\ref{nu_table} presents the results for each method.}
\begin{table}[h]
\centering
\caption{\textcolor{\COLOR}{ Energy index $\nu$ for $\tau = 0.26$, $\gamma = 1$, comparing our method and \cite{Hou} across three topologies.}}
\label{nu_table}
\textcolor{\COLOR}{
\begin{tabular}{|c|c c c|} 
 \hline
 $\nu$ & Fig. \ref{fig:cycle} & Fig. \ref{fig:line} & Fig. \ref{fig:rand} \\
 \hline
 $\nu_1$ (\ref{eq_6-1}) & 12.95 & 12.62 & 13.85 \\ 
 $\nu_2$ \cite{Hou}& 50.48 & 49.58 & 168.63 \\
 \hline
\end{tabular}
}
\end{table}

\ifLongVersion
Fig. \ref{fig:energy_plot} compares the energy functions of our proposed method with that of \cite{Hou}. The results presented in Table \ref{nu_table} and Fig. \ref{fig:energy_plot} demonstrate that our control law achieves significantly lower energy consumption. It is worth noting that we made minimal adjustments to the control gain parameters, selecting $k_1 = 1$ and $k_2 = 1$ in \cite{Hou}'s method.
\begin{figure}[htbp]
\centerline{\includegraphics[scale=.5]{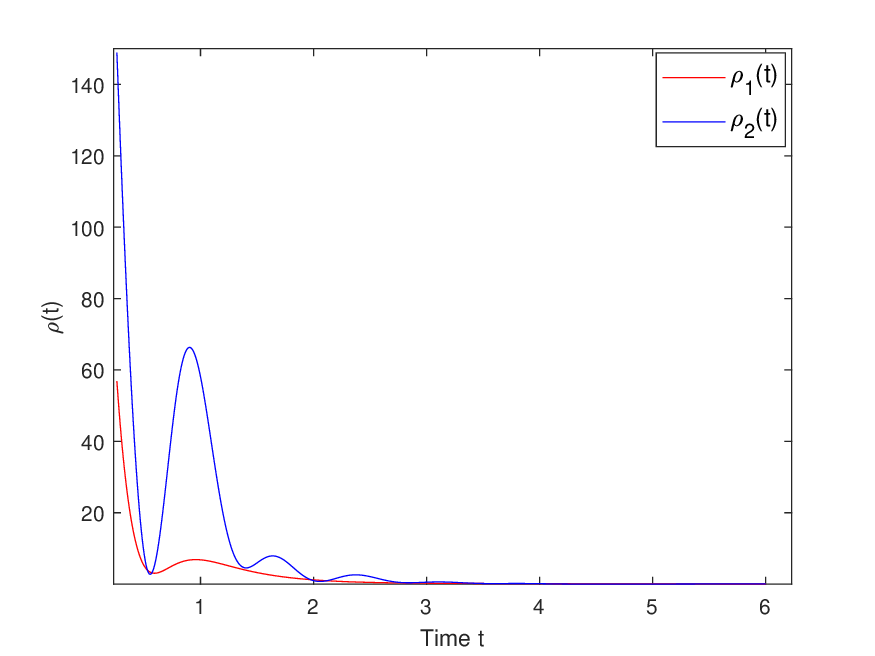}}
\caption{The energy function (\ref{energy_func}) for $\tau=0.26$, $\gamma=1$, corresponding to topology \ref{fig:cycle} ($\rho_1$: control input (\ref{eq_6-1}), $\rho_2$: \cite{Hou}'s method).}
\label{fig:energy_plot}
\end{figure}
\else
\textcolor{\COLOR}{The extended version \cite{Arxiv} includes an energy function comparison with \cite{Hou}.}
\fi

\textcolor{\COLOR}{Although the derived stability delay bound is similar to that found in earlier work based on commensurate delays \cite{Hou,yang}, a key distinction lies in the energy consumption of the control law. With the assumption of a non-commensurate delay, the proposed approach demonstrates a significant reduction in overall energy consumption during convergence. Quantitatively, the total control effort—measured as the integral of the squared control input over time—is markedly lower across all agents. This suggests that accounting for non-commensurate delays is not only more practical, but also improves resource efficiency, making the approach more suitable for energy-constrained multi-agent systems such as autonomous vehicles or sensor networks.}

\section{Conclusion}\label{sec6}
In this paper, an analytical framework was developed to address the impact of constant time delay in weakly connected multi-agent systems, specifically modeled by double integrator dynamics. By leveraging the Lambert W matrix function, an upper bound on the time delay that ensures system stability was derived, even in the presence of non-commensurate communication delays. The rigorous theoretical analysis provides a solid foundation for the results. Numerical simulations have further validated the effectiveness of the proposed approach in real-world scenarios involving leaderless networks of agents. This work contributes to the broader understanding of stability in multi-agent systems with time delays and lays the groundwork for future research on more complex delay models and larger-scale systems. Specifically, it would be of interest to generalize the applicability of the proposed approach to a broader class of multi-agent system topologies, including those characterized by directed graphs and singular adjacency matrices. Moreover, the approach has the potential to analyze stability subject to parameters other than delay.


\bibliography{sn-bibliography}
\bibliographystyle{ieeetr}

\end{document}